\newtheorem{theorem}{Theorem}[section]
\newtheorem{corollary}{Corollary}[theorem]
\newtheorem{proposition}[theorem]{Proposition}
\theoremstyle{definition}
\theoremstyle{definition}
\theoremstyle{definition}
\newtheorem{axiom}{Axiom}
\theoremstyle{definition}
\newtheorem{definition}{Definition}[section]
\theoremstyle{definition}
\newtheorem*{remark}{Remark}
\newcommand{\ob}[1]{{\text{ob}\!\left(#1\right)}}
\renewcommand{\hom}[1]{{\text{hom}\!\left(#1\right)}}
\newcommand{\Ep}{{\textbf{\text{Ep}}}}
\author{Sotirios Henning}
\title{A Monadic Calculus with Episodic Flows}
\begin{document}
\maketitle

\begin{abstract}
We define computational atoms named ``actions" equipped primarily with three operations: reduction, collection, and inspection. We show how actions can be used for decision-making algorithms from simple axioms. We describe the encodings of typical data structures as actions, and provide a method of analysis for algorithms on the basis of data mutation.
\end{abstract}

\section{Introduction}
We recognise the challenge of avoiding undefined behaviour in computer systems, and seek to construct a framework of computation by which the change of data is itself quantifiable data. More generally, we seek to expand the notion of a monad by understanding computation in terms of nested actions, and what defines the success or failure thereof. We expect deterministic algorithms to naturally follow from a metric for change in information.

It is helpful to delineate axioms which precurse the formal and nuanced abstractions we wish to later describe. Recall that the notions of ``success" and ``failure" are eminent in the desired model of computation.

\begin{axiom} \label{ax:Succeeds}
If we observe a particular (not general) flow of actions such that a final action occurs that yields a desirable outcome, the flow of actions is successful, or succeeds.
\end{axiom}

\begin{axiom} \label{ax:Fails}
Correspondingly, if we observe a particular flow of actions such that a final action occurs that yields a pessimal outcome, the flow of actions is unsuccessful, or fails.
\end{axiom}

\begin{definition} \label{def:Nul}
Let the category $\Ep$ be defined by the following diagram. The elements $\omega,\omega^*\in\ob{\Ep}$ are called the \textbf{episodics}.
\end{definition}

\begin{center}
% https://tikzcd.yichuanshen.de/#N4Igdg9gJgpgziAXAbVABwnAlgFyxMJZABgBpiBdUkANwEMAbAVxiRAB12IBbGAczogAvqXSZc+QigCM5KrUYs2nHvzoA9AFTD5MKH3hFQAMwBOPJLJA4ISMiDgALLMZxIAtACZq9Zq0Qc7LAMOIIiJubcdtQ2ltROLm6IXj6K-oHBoerA7tJCwqIgZhaI9rGl1AwQEGhE0gAcZMaMcDDyDHQARjAMAArieARsplh8jm6pfspBPaEFESVW5VZVNURNLW2VXT392INSICNjEwpTAZyZGjl5OkJAA
\begin{tikzcd}
\omega \arrow[r, "\delta^{-1}", shift left=2] \arrow["\delta"', loop, distance=2em, in=125, out=55] & \omega^* \arrow[l, "\delta", shift left=2] \arrow["\delta^{-1}"', loop, distance=2em, in=305, out=235]
\end{tikzcd}
\end{center}

\begin{axiom} \label{ax:SuccessAndFailure}
Consider a morphism $f:X\rightarrow\ob\Ep$. Given $x\in X$, if $f(x)=\omega^*$, then $f$ succeeds under inspection of $x$, otherwise if $f(x)=\omega$, then $f$ fails under inspection of $x$, as per axioms \ref{ax:Succeeds} and \ref{ax:Fails} respectively.
\end{axiom}

\begin{axiom} \label{ax:NullIsSuccess}
Doing nothing always succeeds.
\end{axiom}

\begin{remark}
To construct a model of computation which accepts notions of success and failure, we require a category which accepts $\Ep$ as objects at the base of a cone of which every other object is a vertex.
\end{remark}

\begin{definition} \label{def:Episode}
Given a category $\mathcal C$, its \textbf{episode} is the set given by equation \ref{eq:Episode}.
\end{definition}

\begin{align} \label{eq:Episode}
E(\mathcal C)=\{\omega\}\cup\{\omega^*\}\times \ob{\mathcal C}
\end{align}

\begin{definition} \label{def:Reduction}
The morphism $\delta:E(\mathcal C)\rightarrow\ob{\Ep}$ is the \textbf{reduction} of the episode of $\mathcal C$, given by equations \ref{eq:Reduction_1} and \ref{eq:Reduction_2}. Note that $\delta$ is called the \textbf{reductor} operator.
\end{definition}

\begin{definition} \label{def:ReducibleProperty}
A category $\mathcal C$ is \textbf{reducible} if $\Ep$ is a full subcategory of $\mathcal C$.
\end{definition}

\begin{definition} \label{def:InspectibleProperty}
A reducible category $\mathcal C$ is \textbf{inspectible} if for all $\alpha\in\ob{\mathcal C}$, there exists at least one of each $f_\alpha,f_\alpha^{-1}\in\hom{\mathcal C}$ such that $f_\alpha=\omega^*$ and $f_\alpha^{-1}=\omega$.
\end{definition}

\begin{remark}
Consider $\mathbb N$, the set of natural numbers, to include $0$.
\end{remark}

While monads are considered to enjoy a combinator function, if we consider an element of an episode to be a monad of the form

\texttt{Maybe := Just(flow) | None} 
then the reductor on this monad will yield whether the monad is a \texttt{Just} or a \texttt{None}, and a combinator should yield \texttt{flow} if the monad is in the \texttt{Just} form. However, leaving the case of the \texttt{None} undefined is erroneous. Contextually, comonads are relevant when there is uncertainty of this nature, but techniques that employ comonads require conditions which are similarly uncertain, without loss of generality. We will explore the formal notions of such erroneous behaviour in section \ref{sec:Entropy}.

\begin{align}
\label{eq:Reduction_1}
\delta\left(\omega^*,\varphi\right)&=\omega^* \\
\label{eq:Reduction_2}
\delta\omega&=\omega\;\;\text{per definition \ref{def:Nul}}
\end{align}

Moggi \cite{MOGGI199155} considers a monad over a category to be a specification of a Kleisli triple, which is the categorical definition, and, currently, the usual construction for functional programming. We take an approach that similarly reduces the notion of a monad to a stronger definition that should approach a useful concept in the ecosystem given by this particular calculus.

\begin{definition} \label{def:CollectibleProperty}
A reducible category $\mathcal C$ is \textbf{collectible} if, given a commutative operation with the below signature, $\alpha\in\varphi$ if and only if equation \ref{eq:IndexOperationSuccess} holds, and $\alpha\notin\varphi$ if and only if equation \ref{eq:IndexOperationFailure} holds.
\end{definition}

\begin{align*}
\circ:\left(\mathbb N\times\{\omega\}\right)\times\left(\mathbb N\times\{\omega^*\}\right)&\times\ob{\mathcal C}\rightarrow E(\mathcal C) 
\end{align*} \begin{align}
\label{eq:IndexOperationSuccess}
x\omega\circ_{\varphi}y\omega^*&=(\omega^*,\alpha) \\
\label{eq:IndexOperationFailure}
x\omega\circ_{\varphi}y\omega^*&=\omega
\end{align}

\begin{remark}
It is convenient to use double-index notation such that $(\omega^*,\varphi_{pq})=p\omega\circ_\varphi q\omega^*$. For a flow such that there does not exist a natural number $n>0$ such that either $\omega\neq n\omega\circ_\varphi q\omega^*$ or $\omega\neq p\omega\circ_\varphi n\omega^*$, single-index notation is fine, e.g. $(\omega^*,\varphi_{i})=0\omega\circ_\varphi i\omega^*$.
\end{remark}

\begin{definition} \label{def:EpisodicProperty}
A reducible category $\mathcal C$ is \textbf{episodic} if $E(\mathcal C)\subseteq\ob{\mathcal C}$.
\end{definition}

\begin{definition} \label{def:ActionCategory}
A category is an \textbf{action category} if it is inspectible and collectible. Note that it follows that the category is reducible.
\end{definition}

\begin{definition} \label{def:Action}
An object in an action category is an \textbf{action}.
\end{definition}

\begin{definition} \label{def:Flow}
An action $\alpha$ such that $\alpha\notin\ob{\Ep}$ is a \textbf{flow}. Note that every flow forms the vertex of a cone of which the base is $\Ep$.
\end{definition}

Given the morphism $\left<*\right>:\ob{\mathcal A}\rightarrow E(\mathcal A)$, $\delta\!\left<\alpha\right>$ fulfills axiom \ref{ax:SuccessAndFailure} if we consider the operation to yield the status of $\alpha$'s success.

\begin{definition} \label{def:EpisodicFlowCategory}
An episodic action category $\mathcal E$ is an \textbf{episodic flow category} if, for every action $\alpha\in\ob{\mathcal E}$, there is a morphism $\left<\alpha\right>\in E(\mathcal E)$, and a morphism $\delta\left<\alpha\right>\in\ob{\Ep}$.
\end{definition}

\begin{definition} \label{def:EpisodicFlow}
A flow in an episodic flow category is an \textbf{episodic flow}.
\end{definition}

\begin{proposition} \label{prop:EpFlowCat}
A reducible subcategory of $\mathbf{Cat}$, $\mathcal E$, whose each flow $\varphi$ is a category with the following structure, is an episodic flow category if each $\varphi_{pq}\in\mathcal E$.
\end{proposition}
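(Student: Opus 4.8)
The plan is to unfold Definition~\ref{def:EpisodicFlowCategory} into its constituent requirements and discharge each one using the three hypotheses: that $\mathcal E$ is a reducible subcategory of $\mathbf{Cat}$, that every flow $\varphi$ carries the displayed structure, and that the collected components $\varphi_{pq}$ lie in $\mathcal E$. Concretely, to exhibit $\mathcal E$ as an episodic flow category I must show (i) $\mathcal E$ is an action category, i.e.\ inspectible and collectible in the senses of Definitions~\ref{def:InspectibleProperty} and~\ref{def:CollectibleProperty}; (ii) $\mathcal E$ is episodic, i.e.\ $E(\mathcal E)\subseteq\ob{\mathcal E}$ per Definition~\ref{def:EpisodicProperty}; and (iii) for every action $\alpha$ the status morphism $\left<\alpha\right>$ lands in $E(\mathcal E)$ and its reduction $\delta\left<\alpha\right>$ lands in $\ob{\Ep}$. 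Reducibility is given, so $\Ep$ sits inside $\mathcal E$ as a full subcategory and the episodics $\omega,\omega^*$ together with the reductor of Definition~\ref{def:Reduction} are available throughout.

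First I would establish inspectibility. By Definition~\ref{def:Flow} every flow is the vertex of a cone whose base is $\Ep$; the two legs of that cone terminating at $\omega^*$ and at $\omega$ supply, for each action $\alpha$, the required morphisms $f_\alpha$ and $f_\alpha^{-1}$, and by Axiom~\ref{ax:SuccessAndFailure} these reduce to $\omega^*$ and $\omega$ respectively, which is precisely the inspectibility condition. Next I would verify collectibility by reading off the collection operator $\circ$ from the displayed structure of $\varphi$: setting $p\omega\circ_\varphi q\omega^*=(\omega^*,\varphi_{pq})$ exactly when $\varphi_{pq}$ occurs as a component of $\varphi$, and $=\omega$ otherwise, realises equations~\ref{eq:IndexOperationSuccess} and~\ref{eq:IndexOperationFailure}. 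Here I would check that this assignment is single-valued and commutative in the two index arguments, so that the double-index notation $(\omega^*,\varphi_{pq})$ is well defined, and that membership $\alpha\in\varphi$ is equivalent to the success case, as the definition demands.

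It then remains to secure the episodic closure and condition (iii). Since $E(\mathcal E)=\{\omega\}\cup\{\omega^*\}\times\ob{\mathcal E}$ and $\omega\in\ob{\Ep}\subseteq\ob{\mathcal E}$ by reducibility, closure reduces to showing every pair $(\omega^*,\alpha)$ is an object of $\mathcal E$; this is exactly what the hypothesis $\varphi_{pq}\in\mathcal E$ delivers, because each successful collection has the form $(\omega^*,\varphi_{pq})$ and the components $\varphi_{pq}$ are assumed to belong to $\mathcal E$. Condition (iii) is then essentially formal: the status morphism $\left<*\right>\colon\ob{\mathcal E}\to E(\mathcal E)$ has codomain $E(\mathcal E)$, so $\left<\alpha\right>\in E(\mathcal E)$, and the reductor $\delta\colon E(\mathcal E)\to\ob{\Ep}$ has codomain $\ob{\Ep}$, so $\delta\left<\alpha\right>\in\ob{\Ep}$; both follow from the typing of the respective operators once closure is in hand.

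The step I expect to be the main obstacle is collectibility, together with the two-level bookkeeping it forces. Because $\mathcal E$ is a subcategory of $\mathbf{Cat}$, each flow $\varphi$ is itself a category and each component $\varphi_{pq}$ is simultaneously an object of that internal category and an action of $\mathcal E$; making precise that the displayed structure of $\varphi$ genuinely yields a well-defined, commutative $\circ$ whose success cases are exactly the components landing in $\mathcal E$ is where the real content sits. In particular I would need to confirm that the indexing by $\mathbb N\times\{\omega\}$ and $\mathbb N\times\{\omega^*\}$ respects the single-index collapse noted in the remark following Definition~\ref{def:CollectibleProperty}, so that the operator is consistent across flows; the remaining properties should follow routinely once this compatibility is verified.
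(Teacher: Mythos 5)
Your overall decomposition --- inspectible, collectible, episodic, plus the $\left<\alpha\right>$/$\delta\left<\alpha\right>$ condition of Definition~\ref{def:EpisodicFlowCategory} --- matches the paper's proof, and your collectibility and closure steps (defining $p\omega\circ_\varphi q\omega^*=(\omega^*,\varphi_{pq})$, then using reducibility for $\omega$ and the hypothesis $\varphi_{pq}\in\mathcal E$ for the pairs) are exactly what the paper does, if anything stated more carefully, since you flag well-definedness and commutativity of $\circ$ and you explicitly discharge the final condition on $\left<\alpha\right>$ and $\delta\left<\alpha\right>$, which the paper leaves implicit.

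The gap is in your inspectibility step. You justify the existence of the morphisms $f_\alpha$ and $f_\alpha^{-1}$ by citing the note in Definition~\ref{def:Flow} that every flow forms the vertex of a cone with base $\Ep$. But that note is a statement about flows, i.e.\ about objects of a category already known to be an action category; invoking it while proving that $\mathcal E$ is inspectible (a constituent of being an action category) assumes the conclusion. The paper instead manufactures the cone from the one hypothesis your step never uses: the displayed structure of each flow. Every composite $I=\Theta_i\circ\cdots$ of $\sigma$- and $\theta$-morphisms ending in a $\Theta$-morphism lands in $\omega$, so $\delta\circ I$ is constantly $\omega$, and every composite $I^*=\Sigma_i\circ\cdots$ ending in a $\Sigma$-morphism lands in $(\omega^*,\varphi_{in})$, so $\delta\circ I^*$ is constantly $\omega^*$ by equation~\ref{eq:Reduction_1}; these composites serve as the required $f_\alpha^{-1}$ and $f_\alpha$. (Your appeal to Axiom~\ref{ax:SuccessAndFailure} is also misplaced here: that axiom only interprets the values $\omega,\omega^*$ as failure and success, it does not produce the reductions.) Once you replace the citation of Definition~\ref{def:Flow} by this construction from the diagram, your argument goes through and coincides with the paper's.
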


\begin{center}
% https://tikzcd.yichuanshen.de/#N4Igdg9gJgpgziAXAbVABwnAlgFyxMJZABgBpiBdUkANwEMAbAVxiRAB136AnNACywB9YAEYRAXxDjS6TLnyEUI8lVqMWbTj35DRAJknTZ2PASJ6V1es1aIO7AMZQIOBEZAYTCogGZLam00uOl4BYREwQxkPOVNFElIRVWsNOy0QnWE9CSloz3kzJUTk9Vt7bTDgPQNc4wL4iySrUqCnFzc82O8UCz0SwLT2KGdXWpivQuQ-JoDU8ozKvUix-LiiMj7mgfKRjrq1os3Zsq1dla7JvyOUk64z91Xu5AsffrnONtGHi-iyV633sFQroALY5b4TeLKf7HIIVUE1CH1XykGE3OELUHLJEHZAAFn86LsAApOBAQTAAOZ0AB6ACpSOlgeFIgBKc6QogEmZEkCk9jkqm0hlMzJVNkc5EoAlolokskU6n0xlAsUgiU4p4E65yvkKoXK0WVADWGs6nJ6pDxb1ugupUlUMCglPgRFAADNuOSkGQQDgIEhlH66FgGGxIGBWNQ+DA6FAkGAmAwGADbthKSC6OFwdFPd7EEH-UgLLDBunM+FEbmvSDi9Qi4g-KX7OWs6JSGAALQ5j01n31gOIAnNzg4GM4NtiKK9-OFwcAVlTQTHMAnlenIDztcbA6QAA4l4MV2vRNjq-nF37BwA2Q8trAZtvVDdbpDDhuX3mcVtZHubvuILeV77ne34PhWVQdt2L4ASWDbXu4r5DruiAAJygewx5Pn+SGXg2Ii+l+mHjk+VYztuQENgA7IhAEHsBBZBkRWFZGe5FIOhDEEbRs6+tRPHblRKFTue26cfhTbMSRwggt2iQwfmQlcQYomBnxg4iMOUmrm2snKGR-6znOgafrqo7ScAekdgp24iE2+FaWZ7A-pZOEAZpwklkRLkggZSEiHBGmmdsYGPjJUFufm9kaUBREAMrgVmIhjEh9H4Up8WJYIegpe5xkFpxmVhcauWznhGn0UVFa1gJgbRcWhFOQAKtJyXUAwdAAEYwAwAAKPxsFgYDYLApW2e+g56I1IXsC1OnZSA7VdT1-UWiAQ0jawtUFoFDUYXNa4lUt3V9QNdgbVgo3bSI5V7SOs3SYQx0rWd63DZdW0UOIQA
\begin{tikzcd}
\varphi_{11} \arrow[r, "\sigma_{11}"] \arrow[d, "\theta_{11}"]            & \varphi_{12} \arrow[r, "\sigma_{12}"] \arrow[d, "\theta_{12}"]           & \cdots \arrow[r, "{\sigma_{1,n-1}}"] \arrow[d]                         & \varphi_{1n} \arrow[d, "\theta_{1n}"] \arrow[r, "\Sigma_1"]           & {(\omega^*,\varphi_{1n})} \\
\varphi_{21} \arrow[r, "\sigma_{21}"] \arrow[d, "\theta_{21}"]            & \varphi_{22} \arrow[r, "\sigma_{22}"] \arrow[d, "\theta_{22}"]           & \cdots \arrow[r, "{\sigma_{2,n-1}}"] \arrow[d]                         & \varphi_{2n} \arrow[d, "\theta_{2n}"] \arrow[r, "\Sigma_2"]           & {(\omega^*,\varphi_{2n})} \\
\vdots \arrow[r] \arrow[d, "{\theta_{m-1,1}}"]                            & \vdots \arrow[r] \arrow[d, "{\theta_{m-1,2}}"]                           & \ddots \arrow[r] \arrow[d]                                             & \vdots \arrow[d, "{\theta_{m-1,n}}"] \arrow[r, "\Sigma_k"]            & {(\omega^*,\varphi_{kn})} \\
\varphi_{m1} \arrow[r, "\sigma_{m1}"] \arrow[rrd, "\Theta_1" description] & \varphi_{m2} \arrow[r, "\sigma_{m2}"] \arrow[rd, "\Theta_2" description] & \cdots \arrow[r, "{\sigma_{m,n-1}}"] \arrow[d, "\Theta_k" description] & \varphi_{mn} \arrow[r, "\Sigma_m"] \arrow[ld, "\Theta_n" description] & {(\omega^*,\varphi_{mn})} \\
                                                                          &                                                                          & \omega                                                                 &                                                                       &                          
\end{tikzcd}
\end{center}

\begin{proof}
First, we show $\mathcal E$ is inspectible. Consider that for every morphism $I=\Theta_i\circ\cdots$, $\delta\circ I:E(\mathcal E)\rightarrow\{\omega\}$, and for every morphism  $I^*=\Sigma_i\circ\cdots$, $\delta\circ I^*:E(\mathcal E)\rightarrow\{\omega^*\}$. Therefore, either morphism $\delta\circ I$ or $\delta\circ I^*$ maps $\ob{\mathcal E}\rightarrow\ob{\Ep}$.

Next, we show that $\mathcal E$ is collectible. Allow $(\omega^*,\varphi_{pq})=p\omega\circ_\varphi q\omega^*$.

Finally, we show that $\mathcal E$ is episodic. This is trivial, as each $\Sigma_i$ maps $\ob{\varphi}\rightarrow\{\omega^*\}\times\ob{\varphi}$ and each $\Theta_i$ maps $\ob{\varphi}\rightarrow\{\omega\}$.

These properties satisfy the requirement for an episodic flow category.
\end{proof}

Thus far, we have only seen the initial definitions of this model, and it is perhaps still unclear which notions are analogous to monads and comonads. Intuitively, elements of an episode might seem to be monadic in nature, as episodes would provide an interface essentially equivalent to monads given certain necessary natural transformations. However, although there is a similarity between episodes and monads, it is actually the episodic flow that we find to be analogous to monads, and the operation $\left<\varphi\right>$ that yields an element of an episode is an analogue of a monadic operation.

\section{Theoretical semantics}
It would be helpful to work within the category defined partially in proposition \ref{prop:EpFlowCat}. In fact, the theory of episodic flows that is described in this paper will apply techniques exclusively on this category, in which flows are categories notated as matrices. This notation allows semantics with operations that are common for matrices.

This particular episodic flow category instantiates great utility on account of its simplicity, wherein the category maintains the necessary properties based on the categorical structure of its objects. Indeed, when referring to flows going forward, \textit{we are speaking strictly about episodic flows in this category}.

\begin{definition}
The episodic flow category $\Phi$ is the category described in proposition \ref{prop:EpFlowCat}, equipped with morphisms

$\left<*\right>:\ob{\Phi}\rightarrow E(\Phi)$, $\delta:\ob{\Phi}\rightarrow\ob{\Ep}$, and $\delta^{-1}:\ob{\Phi}\rightarrow\ob{\Ep}$, with the usual definitions.
\end{definition}

Observe that it is natural to represent flows in $\Phi$ as an object similar to a lattice. Indeed, there is a new immediate options for such a representation: a structure like a bounded lattice, by which we exclude the extremities given by each $\Sigma$- and $\Theta$-morphism, allowing the initial object to be an infinimum and the final object to be a supremum.

We denote these objects in an established notation, by describing an $m\times n$ matrix for the described lattice-like structure.

\begin{definition} \label{def:ProcessMatroid}
A matrix $\Lambda$ is a \textbf{process lattice} if there exists a flow $\varphi$ such that every $\Lambda_{ij}=\varphi_{ij}$. Moreover, $\Lambda$ is said to model $\varphi$, written $\Lambda\cong\varphi$.
\end{definition}

\begin{align*}
\Lambda&=\begin{pmatrix}
\varphi_{11}&\varphi_{12}&\cdots&\varphi_{1n}\\
\varphi_{21}&\varphi_{22}&\cdots&\varphi_{2n}\\
\vdots&\vdots&\ddots&\vdots\\
\varphi_{m1}&\varphi_{m2}&\cdots&\varphi_{mn}
\end{pmatrix}\\
&\cong\varphi
\end{align*}

This notation maintains some meaningful matrix operations if we consider $(\omega^*,\Lambda_{ij})=i\omega\circ_\varphi j\omega^*$. We can compute $\left<\varphi\right>$ by applying a recursive algorithm on $\Lambda$.

\begin{definition}
The set of all process lattices given by $\Phi$ is written $\mathbb L$.
\end{definition}

\begin{definition} \label{def:Inspection}
The operation $\left<\alpha\right>$ on an action $\alpha$ is called the \textbf{inspection} of $\alpha$. It is computed as follows.

First let $\left<\omega^*\right>=(\omega^*,\omega)$, and $\left<\omega\right>=\omega$.
If $\alpha$ is a flow compute the following steps in order, considering $\Lambda\cong\alpha$ to be an $m\times n$ process lattice.

\begin{enumerate}
  \item Let $p,q=1$.
  \item If $\delta\left<\alpha_{pq}\right>=\omega$, increment $p$. If $\delta\left<\alpha_{pq}\right>=\omega^*$, increment $q$. Repeat this step until either $p=m+1$ or $q=n+1$.
  \item If $q=n+1$, $\left<\alpha\right>=(\omega^*,\alpha_{p,q-1})$.
  \item If $p=m+1$, $\left<\alpha\right>=\omega$.
\end{enumerate}

Inspection is also considered a valid operation on process lattices, such that $\left<\Lambda\right>=\left<\varphi\right>$ for any $\Lambda\cong\varphi$.

If $\Lambda$ is an empty matrix, this operation implies $\left<\Lambda\right>$ takes the form $(\omega^*,\cdot)$. However, there is no element in $\Lambda$ that gives the inspection a value, so we allow $\left<\alpha\right>=\left<\Lambda\right>=(\omega^*,\omega)$. This fulfills axiom \ref{ax:NullIsSuccess}, as an empty flow effectively ``does nothing" succesfully.
\end{definition}

\begin{proposition}
For a flow $\varphi$, $\left<\varphi\right>=(\omega^*,\omega)$ if and only if there are no elements in $\varphi$.
\end{proposition}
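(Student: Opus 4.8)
The plan is to prove the biconditional by treating each implication separately. The reverse implication will be essentially immediate from the definition of inspection, while the forward implication requires an analysis of the terminating behaviour of the inspection traversal. I would handle the forward direction by contraposition.

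For the reverse direction, suppose $\varphi$ has no elements, so that the process lattice $\Lambda\cong\varphi$ is the empty matrix. Then by the clause in Definition \ref{def:Inspection} governing inspection of an empty matrix, $\left<\varphi\right>=\left<\Lambda\right>=(\omega^*,\omega)$ directly; this is exactly the stipulation that an empty flow ``does nothing'' successfully, in keeping with Axiom \ref{ax:NullIsSuccess}. This half needs no argument beyond citing the definition.

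For the forward direction I would assume $\varphi$ is non-empty, so $\Lambda$ is an $m\times n$ matrix with $m,n\geq 1$, and show $\left<\varphi\right>\neq(\omega^*,\omega)$. First I would confirm termination: each pass of step 2 increments exactly one of $p$ or $q$, so after finitely many steps exactly one of the halting conditions $p=m+1$ or $q=n+1$ is met, and the output is then fixed by step 3 or step 4. If step 4 fires ($p=m+1$), the output is $\omega$, which is not of the form $(\omega^*,\cdot)$ and so differs from $(\omega^*,\omega)$. The remaining, and crucial, case is step 3 ($q=n+1$), where the output is $(\omega^*,\alpha_{p,q-1})=(\omega^*,\alpha_{p,n})$. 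Here I would exploit that the recorded second coordinate is the literal matrix entry $\alpha_{p,n}$, not its inspection: for $q$ to have advanced from $n$ to $n+1$, step 2 must have read $\delta\left<\alpha_{p,n}\right>=\omega^*$. By the reduction equations of Definition \ref{def:Reduction}, $\delta\left<\alpha_{p,n}\right>=\omega^*$ forces $\left<\alpha_{p,n}\right>$ into the form $(\omega^*,\cdot)$, hence $\left<\alpha_{p,n}\right>\neq\omega$; but since $\left<\omega\right>=\omega$ by the base case of inspection, this yields $\alpha_{p,n}\neq\omega$. Consequently $(\omega^*,\alpha_{p,n})\neq(\omega^*,\omega)$, completing the contrapositive.

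I expect the step-3 argument to be the main obstacle: one must be careful that the output carries the entry $\alpha_{p,n}$ itself rather than its reduced value, and then chain the implication $\alpha_{p,n}=\omega\Rightarrow\left<\alpha_{p,n}\right>=\omega\Rightarrow\delta\left<\alpha_{p,n}\right>=\omega$ to contradict the rightward move that produced $q=n+1$. Everything else — termination of the traversal and the step-4 case — is routine once the halting behaviour is pinned down.
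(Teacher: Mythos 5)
Your proposal is correct and takes essentially the same route as the paper's proof: both hinge on the single key observation that if the entry read in the last column were $\omega$, then $\delta\left<\omega\right>=\delta\omega=\omega$ would have incremented $p$ rather than $q$, so a non-empty flow can never inspect to $(\omega^*,\omega)$. You package this as a contraposition with explicit termination and an explicit step-4 case, whereas the paper phrases it as a contradiction and leaves those routine parts implicit, but the argument is the same.
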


\begin{proof}
The proof is by contradiction. Notice that there are two cases in which $\left<\varphi\right>$ may equal $(\omega^*,\omega)$: $\varphi$ contains no elements, or the last value reached during inspection was $\omega$.

Suppose the last value reached during inspection was $\omega$. This implies that, during that step, inspection occurs on the last column of some $\Lambda\cong\varphi$. Moreover, it would have been evaluated that $\delta\left<\omega\right>=\delta\omega=\omega$, which would not have incremented $q$, but rather $p$, implying that $p=m+1$. This further implies $\left<\varphi\right>=\omega$, which contradicts our assumption. Therefore, $\varphi$ is necessarily empty.
\end{proof}

\begin{remark}
An empty flow $\varphi$ should be taken to contain only one object $(\omega^*,\omega)\in\ob{\varphi}$ and $\hom{\varphi}=\emptyset$. This is consistent with the definition of inspection.
\end{remark}

\begin{remark}
The only two solutions to $\left<\alpha\right>$ in $\Phi$ are $\alpha=(\;)$ and $\alpha=\omega^*$.
\end{remark}

\begin{definition}
Let $F(\Lambda)=p^\prime-1$ and $S(\Lambda)=q^\prime-1$, where $p^\prime$ and $q^\prime$ are the last values of $p$ and $q$ reached during inspection, respectively.
\end{definition}

\begin{definition} \label{def:EpInverse}
The \textbf{episodic inverse} of an action, written $\alpha^*$ is defined such that
\begin{itemize}
  \item If $\alpha$ is a flow, given the process lattice $\Lambda\cong\alpha$, $\alpha^*\cong\Lambda^*$, where $\Lambda^*_{ij}=(\Lambda^T_{ij})^*$.
  \item Otherwise, $(\omega)^*=\omega^*$ and $(\omega^*)^*=\omega$.
\end{itemize}
\end{definition}

\begin{proposition}
For any $m\times n$ process lattice $\Lambda$, $(\delta\left<\Lambda\right>)^*=\delta\left<\Lambda^*\right>$ if $m,n>0$.
\end{proposition}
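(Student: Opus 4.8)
The plan is to reduce the claim to two ingredients: a local duality satisfied by each entry of the lattice, and a global correspondence between the inspection walk on $\Lambda$ and the walk on $\Lambda^*$. First I would establish the cell-level identity
\[
\delta\langle\alpha^*\rangle=(\delta\langle\alpha\rangle)^*
\]
for every action $\alpha$ occurring as an entry of the lattice. For episodic entries this is immediate: $(\omega)^*=\omega^*$ forces $\delta\langle\omega^*\rangle=\omega^*=(\delta\langle\omega\rangle)^*$, and symmetrically when the entry is $\omega^*$. For an entry that is itself a flow, the identity is precisely the present proposition applied to the strictly smaller sub-lattice modelling that entry, so the whole argument naturally proceeds by structural induction on the nesting depth of $\Lambda$, with the all-episodic lattice as the base case.

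The heart of the proof is the walk correspondence. Since definition \ref{def:EpInverse} gives $\Lambda^*_{ij}=(\Lambda^T_{ij})^*=(\Lambda_{ji})^*$, the entry of $\Lambda^*$ at the transposed coordinate $(q,p)$ is exactly $(\Lambda_{pq})^*$, whose reduction is the flip of that of $\Lambda_{pq}$ by the cell-level identity. I would then prove, by induction on the step count of the inspection algorithm, the invariant that whenever the walk on $\Lambda$ sits at $(p,q)$ the walk on $\Lambda^*$ sits at $(q,p)$: when $\Lambda$ reads $\omega$ at $(p,q)$ it increments $p$ (moves down) to $(p+1,q)$, while $\Lambda^*$ reads $\omega^*$ at $(q,p)$ and increments its column index (moves right) to $(q,p+1)$, which is the transpose of $(p+1,q)$; the $\omega^*$ case is symmetric. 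Hence the two walks remain coordinate-swapped mirror images at every step.

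Finally I would read off the termination. The walk on the $m\times n$ lattice $\Lambda$ halts either at $q=n+1$ (exit right, $\delta\langle\Lambda\rangle=\omega^*$) or at $p=m+1$ (exit bottom, $\delta\langle\Lambda\rangle=\omega$), and since each step moves exactly one coordinate, precisely one edge is crossed. Under the transposition on the $n\times m$ lattice $\Lambda^*$, exiting the right edge of $\Lambda$ ($q=n+1$) forces the row index of $\Lambda^*$ to reach $n+1$, i.e.\ an exit through its bottom edge, yielding $\delta\langle\Lambda^*\rangle=\omega=(\omega^*)^*$; dually, exiting the bottom of $\Lambda$ becomes an exit through the right of $\Lambda^*$, yielding $\delta\langle\Lambda^*\rangle=\omega^*=(\omega)^*$. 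Either way $\delta\langle\Lambda^*\rangle=(\delta\langle\Lambda\rangle)^*$. The hypothesis $m,n>0$ is exactly what guarantees the walk begins at a genuine cell $(1,1)$, so that neither the empty-lattice convention $\langle\Lambda\rangle=(\omega^*,\omega)$ nor an immediate boundary exit can interfere.

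I expect the main obstacle to be book-keeping the boundary behaviour with complete precision — in particular verifying that the two walks halt on the same logical step and that crossing one edge of $\Lambda$ corresponds to crossing the \emph{complementary} edge of $\Lambda^*$ rather than the same one, since it is exactly this swap of the two halting conditions that manufactures the outer $(\cdot)^*$ in the conclusion.
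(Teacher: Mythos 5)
Your proof is correct and follows essentially the same route as the paper's: the inspection walk on $\Lambda^*$ is the coordinate-swapped mirror of the walk on $\Lambda$, so the two exit conditions (right edge versus bottom edge) are exchanged, and this swap is exactly what produces the outer $(\cdot)^*$. If anything, your write-up is more careful than the paper's, which asserts the row/column correspondence of the final action directly without isolating the cell-level identity $\delta\langle\alpha^*\rangle=(\delta\langle\alpha\rangle)^*$ or the step-by-step walk induction that you make explicit.
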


\begin{proof}
Notice that $\Lambda_{pq}=\Lambda^*_{qp}$. We compute the necessary steps to find $\left<\Lambda\right>$ and check that either $p^\prime=m+1$ or $q^\prime=n+1$, where $p^\prime$ and $q^\prime$ are the final values assumed by the indices $p$ and $q$ during inspection. In the case that $p^\prime=m+1$, the last action in $\Lambda$ is necessarily on row $m$, which corresponds to the column $m$ in $\Lambda^*$. This implies $\delta\left<\Lambda\right>=\omega$ and $\delta\left<\Lambda^*\right>=\omega^*$. Similarly, in the case that $q^\prime=n+1$, the last action in $\Lambda$ is necessarily on column $n$, which corresponds to the row $n$ in $\Lambda^*$. This implies $\delta\left<\Lambda\right>=\omega^*$ and $\delta\left<\Lambda^*\right>=\omega$. 

Therefore, $\delta\left<\Lambda\right>=\omega^*\Leftrightarrow
\delta\left<\Lambda^*\right>=\omega$ and $\delta\left<\Lambda\right>=\omega\Leftrightarrow
\delta\left<\Lambda^*\right>=\omega^*$, and it immediately follows that $(\delta\left<\Lambda\right>)^*=\delta\left<\Lambda^*\right>$.
\end{proof}

\begin{corollary}
It follows that $S(\Lambda)=F(\Lambda^*)$ for a non-empty process lattice $\Lambda$.
\end{corollary}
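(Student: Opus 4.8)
The plan is to run the inspection algorithm of Definition \ref{def:Inspection} on $\Lambda$ and on $\Lambda^*$ in lockstep and show that the second trajectory is exactly the coordinate-swap of the first; the corollary then falls out by reading off the terminal indices. The input I rely on is Definition \ref{def:EpInverse}, which gives $\Lambda^*_{ij}=(\Lambda_{ji})^*$, and hence $\delta\langle\Lambda^*_{ij}\rangle=(\delta\langle\Lambda_{ji}\rangle)^*$: the success/failure status recorded at cell $(i,j)$ of $\Lambda^*$ is the episodic flip of the status at the transposed cell $(j,i)$ of $\Lambda$. Since $\Lambda$ is $m\times n$, note that $\Lambda^*$ is $n\times m$, so the row and column bounds are interchanged between the two inspections.

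First I would set up an induction on the step counter of the inspection, maintaining the invariant that whenever $\Lambda$ sits at position $(p,q)$ at step $t$, the inspection of $\Lambda^*$ sits at position $(q,p)$ at step $t$. The base case is immediate, as both inspections initialise $p,q=1$. For the inductive step, suppose $\Lambda$ is at $(p,q)$ and $\Lambda^*$ at $(q,p)$. If $\delta\langle\Lambda_{pq}\rangle=\omega$, the algorithm increments $p$, sending $\Lambda$ to $(p+1,q)$; meanwhile $\delta\langle\Lambda^*_{qp}\rangle=\omega^*$, so the algorithm increments the column index in $\Lambda^*$, sending it to $(q,p+1)$, which is precisely the swap of $(p+1,q)$. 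The case $\delta\langle\Lambda_{pq}\rangle=\omega^*$ is symmetric: $\Lambda$ advances to $(p,q+1)$ while $\Lambda^*$ advances to $(q+1,p)$. In both cases the invariant is preserved, and exactly one increment fires on each side, so the two inspections stay synchronised step for step.

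Finally I would read off the terminal indices. Under the swap invariant the row index of $\Lambda^*$ always equals the column index $q$ of $\Lambda$, and the column index of $\Lambda^*$ equals the row index $p$ of $\Lambda$; consequently the termination triggers correspond, with $q=n+1$ in $\Lambda$ matching the row-overflow bound $n+1$ of $\Lambda^*$ and $p=m+1$ matching its column-overflow bound $m+1$. Letting $q'$ denote the final column index reached in $\Lambda$, the invariant forces the final row index reached in $\Lambda^*$ to be the same value $q'$. Therefore $F(\Lambda^*)=q'-1=S(\Lambda)$, which is the claim; the hypothesis that $\Lambda$ is non-empty guarantees the inspection performs at least one step so that these indices are well defined.

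The main obstacle I anticipate is the termination bookkeeping rather than the step-by-step correspondence: I must confirm that the two inspections halt at the same step, so that neither can run one step longer than the other and spoil the equality of terminal indices. This is exactly what the status flip buys us — it interchanges the two increment rules — but it needs to be checked that an overflow condition for $\Lambda$ transposes to a genuine overflow condition for the $n\times m$ array $\Lambda^*$ (and not to some out-of-range index), which is where the careful relabelling of the dimensions $m$ and $n$ earns its keep. Once that synchronisation is established, the identity $S(\Lambda)=F(\Lambda^*)$ is immediate.
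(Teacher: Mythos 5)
Your proof is correct and takes essentially the same route as the paper: the corollary is read off from the transposed lockstep correspondence between the inspections of $\Lambda$ and $\Lambda^*$, which is exactly the argument underlying the paper's proof of the preceding proposition (the paper notes that $\Lambda_{pq}$ corresponds to $\Lambda^*_{qp}$ and tracks where the two inspections terminate), and your explicit swap-invariant induction is simply a more rigorous rendering of that correspondence, making visible the fact that the corollary needs the step-for-step synchronisation of trajectories and not merely the proposition's conclusion about $\delta\left<\Lambda\right>$. The one step you state too quickly is the cellwise status flip $\delta\left<\Lambda^*_{ij}\right>=\left(\delta\left<\Lambda_{ji}\right>\right)^*$: this follows from Definition \ref{def:EpInverse} alone only when $\Lambda_{ji}\in\ob{\Ep}$, whereas for entries that are themselves (non-empty) nested flows it is the preceding proposition applied recursively --- a citation that is available to you here, and a point the paper itself also elides.
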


We describe the composition of process lattices to observe sequential execution as the inspection of nested flows.

\begin{definition}
If $\Lambda_2$ is a process lattice given by the map $\Lambda_2:\mathbb L\rightarrow\mathbb L$, the \textbf{composition} of $\Lambda_1$ with $\Lambda_2$ is written $\Lambda_2(\Lambda_1)=\Lambda_1\setminus\Lambda_2$. Composition is left-associative, so $\Lambda_1\setminus\Lambda_2\setminus\Lambda_3=(\Lambda_1\setminus\Lambda_2)\setminus\Lambda_3$.
\end{definition}

\subsection{Encoding}
It is necessary for a model of computation to be capable of encoding numbers. In general, it is desirable to find an algorithm for encoding information as an action.

\begin{definition}
A \textbf{lift} $\Delta_X:X\rightarrow\ob{\Phi}$ of an object in $X$ is an isomorphism ($\Delta_X(x+y)=\Delta_X(x)+\Delta_X(y)$ for all $x,y\in X$ and all operations $+$ on $X$). A \textbf{colift} $\Delta^*_X:\ob{\Phi}\rightarrow X$ of an object in $X$ is also an isomorphism such that $\Delta_X^*(\Delta_X(x))=x$.
\end{definition}

\begin{definition}
An element of the episode $E(\Phi)$ of the form $(\omega^*,\alpha)=\rho$ is isomorphic to a process lattice given by $\bar\rho=\begin{pmatrix}
\omega^*&\alpha
\end{pmatrix}\cong\Delta_{E(\Phi)}(\rho)$
\end{definition}

\begin{definition}
Given $\bar\rho=\begin{pmatrix}
\omega^*&\alpha
\end{pmatrix}$, $\hat\rho=\alpha$.
\end{definition}

\subsection{Arithmetic}\label{sec:Arithmetic}

Because the functions $S$ and $F$ allow mapping from a process lattice to a natural number, we can find a colift to the ring of integers immediately. Let $\Delta_{\mathbb Z}^*$ be given as follows, where $\Lambda\cong\varphi$.

\begin{align*}
\Delta_{\mathbb Z}^*(\varphi)&=\begin{cases}
S(\Lambda)&F(\Lambda)=0\\
-F(\Lambda)&\text{otherwise}
\end{cases}
\end{align*}

Consider that a process lattice of this form must naturally always succeed, lest we assume an erroneous result for coming techniques. This leads to a natural construction of a lift from the ring of integers: $0$ maps to the empty flow, positive integers map to a flow given by a process lattice that is written as a row matrix of that many $\omega^*$, and negative integers map to a flow given by a process lattice that is written as a column matrix of that many $\omega$, with one row of a single $\omega^*$ at the bottom, to ensure success upon inspection, as such with these examples:

\begin{align*}
\Delta_{\mathbb Z}(2)&\cong\begin{pmatrix}
\omega^*&\omega^*
\end{pmatrix}\\
\Delta_{\mathbb Z}(-5)&\cong\begin{pmatrix}
\omega\\\omega\\\omega\\\omega\\\omega\\\omega^*
\end{pmatrix}
\end{align*}

\begin{definition}
The process lattice $\Lambda\cong\varphi$ given by $\Delta_{\mathbb Z}(n)=\varphi$ is written as a boldface $\mathbf n$ for the numeral represented by $n$.
\end{definition}

From these definitions, arithmetic operations on (within the ring of integers) on process lattices are easily described in terms of such isomorphisms.

Given the collection operation, it is possible to describe arithmetic for process lattices of lifted integers. Given $a,b\geq0$,

\begin{align*}
\mathbf a+\mathbf b&\cong\begin{pmatrix}
\mathbf a_1&\cdots&\mathbf a_a
&\mathbf b_1&\cdots&\mathbf b_b
\end{pmatrix}\\
\mathbf a-\mathbf b&\cong\begin{pmatrix}
\mathbf a_1&\cdots&\mathbf a_{a-b}
\end{pmatrix}&\text{where }a\geq b\\
\mathbf a-\mathbf b&\cong\begin{pmatrix}
\mathbf a_1&\cdots&\mathbf a_{b-a}&\omega
\end{pmatrix}^*&\text{where }a<b\\
-\mathbf a&\cong\begin{pmatrix}
\mathbf a_1\\\vdots\\\mathbf a_a\\\omega^*
\end{pmatrix}
\end{align*}

Let the colift of rational numbers be

\begin{align*}
\Delta_{\mathbb Q}^*(\varphi)&=\begin{cases}
0&S(\Lambda)=F(\Lambda)=0\\
\frac{S(\Lambda)-1}{\Delta_{\mathbb Z}^*(\alpha)}&\text{otherwise}
\end{cases}\\\\
&\text{where }\alpha\cong\widehat{\left<\Lambda\right>}
\end{align*}

With this colift, the lift of natural numbers should see each process lattice as an $n+1\times 1$ row matrix of $\omega^*$, with the final column being a lift of an integer. By taking the quotient $\frac{n}{\Delta_{\mathbb Z}^*(\alpha)}$, it is trivial to see that this colift is surjective.

\subsection{Formal Logic}

We can compute formal logic by treating $\omega^*$ as true and $\omega$ as false. We see an encoding of formal logic through process lattices with the following inspection-reduction operations:

\begin{align*}
a\vee b&\mapsto\delta\left<\begin{matrix}
a&b\\
b&\omega^*
\end{matrix}\right>\\
a\wedge b&\mapsto\delta\left<\begin{matrix}
a&b
\end{matrix}\right>\\
\neg a&\mapsto a^*\\
a\rightarrow b&\mapsto\delta\left<\begin{matrix}
a^*&b\\
b&\omega^*
\end{matrix}\right>\\
&=\delta\left<\begin{matrix}
a&b\\
\omega^*&a^*
\end{matrix}\right>\\
a\leftrightarrow b&\mapsto\delta\left<\begin{matrix}
a&b\\
b^*&a^*
\end{matrix}\right>
\end{align*}

The last operation is particularly useful, as it can be used to show that two flows are equal within the theory of episodic flows.

\begin{definition}
For two $m\times n$ process lattices $\Lambda,\Lambda^\prime$, the episodic $\text{eqpl}(\Lambda,\Lambda^\prime)$ is given by equation \ref{eq:EqualPL}. For an $a\times b$ process lattice $\Lambda^{\prime\prime}$ such that either $a\neq m$ or $b\neq n$, $\text{eqpl}(\Lambda,\Lambda^{\prime\prime})=\omega$.
\end{definition}

\begin{align} \label{eq:EqualPL}
\text{eqpl}(\Lambda,\Lambda^\prime)&=\delta\left<M\right>
\end{align}

where $M$ is a $1\times mn$ process lattice given recursively by

\begin{align*}
M_{im+j}&=\begin{cases}
\Lambda_{ij}\leftrightarrow\Lambda^\prime_{ij}&\Lambda_ij,\Lambda^\prime_{ij}\in\ob{\Ep}\\
\text{eqpl}(\Lambda_{ij},\Lambda^\prime_{ij})&\text{otherwise}
\end{cases}
\end{align*}

It is trivial to see $\text{eqpl}(\Lambda,\Lambda^\prime)=\omega^*$ when $\Lambda=\Lambda^\prime$ and $\text{eqpl}(\Lambda,\Lambda^\prime)=\omega$ when $\Lambda\neq\Lambda^\prime$.

\subsection{Typing}

\begin{definition} \label{def:MorphismStructure}
The \textbf{morphism structure} of an $m\times n$ process lattice $\Lambda\cong\varphi$, written $\mu\Lambda$, is a $2m\times n$ matrix such that every morphism in $\hom{\varphi}$ is in its respective place, such that
\begin{itemize}
  \item Every $\sigma_{ij},\theta_{ij}$ is on column $j$.
  \item Every $\sigma_{ij}$ is on row $2i-1$.
  \item Every $\theta_{ij}$ is on row $2i$.
  \item Every $\Sigma_{i}$ is on row $2i-1$, column $n$.
  \item Every $\Theta_{j}$ is on row $m$, column $j$.
\end{itemize}
\end{definition}

\begin{definition}
The \textbf{partial morphism structure} of an $m\times n$ process lattice $\Lambda\cong\varphi$, written $\pi\Lambda$, is a $(2m-1)\times (n-1)$ matrix such that every morphism in $\hom{\varphi}$ except for $\Sigma$- and $\Theta$-morphisms is in its respective place, such that
\begin{itemize}
  \item Every $\sigma_{ij},\theta_{ij}$ is on column $j$.
  \item Every $\sigma_{ij}$ is on row $2i-1$.
  \item Every $\theta_{ij}$ is on row $2i$.
\end{itemize}
\end{definition}

Notice that the fact that a morphism structure contains the morphisms between actions in $\varphi$ implies that two flows with similar conditions for success and failure will have the same morphism structure.

\begin{definition} \label{def:Type}
Two flows such that $\varphi\cong\Lambda$ and $\varphi^\prime\cong\Lambda^\prime$, $\varphi$ and $\varphi^\prime$ \textbf{are of the same type} if $\mu\Lambda=\mu\Lambda^\prime$. This is written $\varphi\sim\varphi^\prime$.
\end{definition}

\begin{definition}
Two flows such that $\varphi\cong\Lambda$ and $\varphi^\prime\cong\Lambda^\prime$, $\varphi$ \textbf{precedes} $\varphi^\prime$ if the morphism structure of the latter can be partitioned to include the partial morphism structure of the former. This is written $\varphi\vdash\varphi^\prime$.
\end{definition}

Consider that two process lattices, $\Lambda$ and $\Lambda^\prime$, may not be of the same type even if $\Lambda=\Lambda^\prime$. Although they may both yield the same result upon inspection, further analysis may reveal that certain morphisms within $\mu\Lambda$ and $\mu\Lambda$ have differing domains, codomains, or images.

\subsection{Invariance}
Flows are prescriptively process-like entities, although we have shown instances of encodings of data to flows from which resource-centric analyses directly benefit. It would be fruitful to show encodings of robust datatypes; namely, lifts of abstract datatypes would nearly complete a modern framework of computation, with substantial utility in computer programming. Union- and intersection-type data structures should enjoy lifts which treat the selection of data therein as flows.

A common technique in the analysis of algebraic union types, taken often for granted, is to assign each variant of the union structure a relatively unique integer or natural number, in the case of enumerated types.

To holistically describe complex data structures, it is necessary to consider the interface with which data is accessed. With regard to union data structures, which should be constructed in terms of variants thereof, we can consider an object of that structure with a type $T$ to be a triple $(\chi_T,\chi_V,v)$, where the first object is an invariant value unique to the type $T$, the second is an invariant value unique to the variant type $V$, and the last is a value $v\in V$. As a process lattice, this value might take the form $(\:\chi_T\;\chi_V\;v\:)$.

With regard to intersection data structures, we should consider the object to be a mapping $f$ from such an invariant $\chi_V$ to a value $v$, such that a value of type $T_k$ in a structure given by a Cartesian product $T_1\times T_2\times\cdots\times T_n$ is $f(\chi_{T_k})$.

It is now only necessary to find a technique with which to match a particular object to its appropriate ``type," which is doable by finding the respective invariant. A union $T_1\cup T_2\cup\cdots\cup T_n$ or intersection $T_1\times T_2\times\cdots\times T_n$ should have an invariant uniquely derived from the invariants $\chi_{T_1},\chi_{T_2},$ et cetera, and there should be some invariants that are initial, in that they are unique to types which are neither union nor intersections.

An arithmetically trivial way to accomplish this would be to select each initial type and let its invariant be a unique pair of unique prime numbers. The invariant of a union would then be a product of the first prime number in each pair, which is known to be a unique natural number. The invariant of an intersection would then be a product of the second prime number in each pair. This convention might be convenient based on established number theory, but it would be memory-inefficient in a digital computer, as large or nested structures would have invariants with values much greater than a real digital computer could store in volatile memory.

Another technique might be to simply let an initial invariant take the form of a unique symbol, and allow a union type to be a pair $(U,X)$ where $U$ is simply a symbol that denotes a union data structure, and $X$ is the set of all invariants unified by the data structure. Similarly, the invariant $(N,X)$ of an intersection type lets $N$ be a symbol denoting intersection. This might be less expensive in the memory of a digital computer, but is obviously more computationally expensive when comparing the equivalence of large nested types, as an algorithm that naturally checks for equivalence must walk a tree given by the nested sets of invariants.

When manually searching for invariance, the former method of considering prime numbers is preferable; indeed, any subset of objects that produce a unique value under an operation is sufficient to describe an invariant for some data. This notion gives way to a classification of invariants as an abelian group with prime elements.

\begin{definition}
The abelian group $\mathcal X=(\mathbb X,\cdot)$ is a \textbf{group of invariants}, with $\mathbf X\subset\mathbb X$ being the set of prime elements of $\mathcal X$, which are called the \textbf{initial invariants}. It is useful to use the conventional notation $\chi_a\,\vert\,\chi_b$ to denote $\chi_b=\cdots\cdot\chi_a\cdot\cdots$.
\end{definition}

In the practice of programming on a digital computer, it is necessary to select initial invariants for primitive types, and take the product of these invariants for compound types. It is more involved to find an invariant for a union type that differs from an intersection type, so we find that a combination of the former and latter techniques of picking concrete values for the invariant of a structure is required.

\begin{definition}
The \textbf{structural invariant} $\chi_\varphi$ of a flow $\varphi$ representing an object instantiating a non-initial (either union or intersection) data structure is a pair $(S,x)$ where $S\in\{U,N,P\}$ is a symbol and $x\in\mathcal X$ is an invariant. The symbol $U$ describes a union data structure, $N$ an intersection, and $P$ a primitive (initial) data structure.
\end{definition}

A form of a process lattice for an intersection-type flow, given some invariant $\chi_A\cdot\chi_B\cdot\cdots$ will be something similar (functionally the same) to the following.

\begin{align*}
\Lambda_N(x)&=\begin{pmatrix}
\text{eqpl}(x,\Delta_{\mathbb X}(\chi_A))& a \\
\text{eqpl}(x,\Delta_{\mathbb X}(\chi_B))& b \\
\vdots&\vdots
\end{pmatrix}
\end{align*}

where $x$ is a lifted invariant. In this case, the structure acts to select a value of a type associated with the invariant $\Delta_{\mathbb X}^*(x)$.

As aforementioned, a form of a process lattice for a union-type flow with the same invariant will be something similar to the following.

\begin{align*}
\Lambda_U&=(\:(\chi_A\cdot\chi_B\cdot\cdots)\;\chi_V\;v\:)
\end{align*}

where $\chi_V\,\vert\,\chi_A\cdot\chi_B\cdot\cdots$.

\section{Episodic entropy} \label{sec:Entropy}

Within a system where some flows are treated analagously as data and some as programs, the theory of episodic flows would benefit from a discussion of digital memory. In the general sense, notions of information comparable to information theory follow the structure of flows in the detailed approach to capturing the virtual propagation of information as during the inspection of flows. Within a given flow, it is the thought of success or failure that propagates during inspection. Across a given flow, which is to say from the flow to the result of its inspection, it is the information regarding its original state that propagates to a final value. By parameterising a flow on a variable action, the result of its inspection analgously lends that action to mutation.

To discuss such mutation rigorously, we should observe first the abstract essences of an action: its identity, its meaning, and its content. Formally, discourse revolves around the relations $=,\sim,\vdash$, and the comparison of invariants with $=$. Respectively, the identities (values) of two actions can be compared with $=$, their meanings compared with $\sim$ and $\vdash$, and their contents with the comparison of their structural invariants. Thus, the three entities which uniquely form an action are the action itself, its type, and the actions it composes if it is a flow. Without elementwise comparison of actions, we see it is possible to describe their differences in terms of those three qualities.

It follows that there are five fundamental values that a difference between two actions can take. There is the case when the actions are equal; the case when the actions are unequal but are of the same type and have equal structural invariants; the subsequent case when the actions are of different types but have equal structural invariants; the case when one action precedes another but have unequal structural invariants; and the case when the actions are unequal, neither precede another, and they have unequal structural invariants.

Of these five values, only four assert inequality. These four cases appreciate a natural ordering in the order written. Informally, each case can be reached from the previous by some mutation. An equivalent statement is that no case, without loss of generality, can be reached from the subsequent case. 

\begin{definition} \label{def:DegreesOfLoss}
Given a process lattice $\Lambda^\prime=\overline{\left<\Lambda\setminus\Lambda^{\prime\prime}\right>}$, such that
\begin{align*}
\Lambda&\cong\alpha\\
\Lambda^{\prime}&=(\:\omega^*\;\,\alpha^\prime\:)
\end{align*}
$\alpha$ is said to \textbf{suffer $n$ degrees of entropic loss} under the following conditions:
\begin{align*}
\alpha&\leadsto\alpha^\prime\;\text{ if }\alpha=\alpha^\prime\\
\alpha&\leadsto\dot\gamma\;\text{ if }\alpha\neq\alpha^\prime\text{ but }\alpha\sim\alpha^\prime\text{ and }\chi_\alpha=\chi_{\alpha^\prime}\\
\alpha&\leadsto\ddot\gamma\;\text{ if }\alpha\neq\alpha^\prime\text{ and }\alpha\nsim\alpha^\prime\text{ but }\chi_\alpha=\chi_{\alpha^\prime}\\
\alpha&\leadsto\dddot\gamma\;\text{ if }\alpha\neq\alpha^\prime\text{ and }\chi_\alpha\neq\chi_{\alpha^\prime}\text{ but }\alpha\vdash\alpha^\prime\text{ or }\alpha^\prime\vdash\alpha\\
\alpha&\leadsto\gamma\;\text{ if }\alpha\neq\alpha^\prime,\alpha\nsim\alpha^\prime,\text{ and }\chi_\alpha\neq\chi_{\alpha^\prime}
\end{align*}

where
\begin{itemize}
  \item $\alpha\leadsto\alpha^\prime$ denotes $\alpha$ suffers $0$ degrees of entropic loss,
  \item $\alpha\leadsto\dot\gamma$ denotes $\alpha$ suffers $1$ degree of entropic loss,
  \item $\alpha\leadsto\ddot\gamma$ and $\alpha\leadsto\dddot\gamma$ denote $\alpha$ suffers $2$ degrees of entropic loss,
  \item and $\alpha\leadsto\gamma$ denotes $\alpha$ suffers $3$ degrees of entropic loss.
\end{itemize}

This is supposing $\Lambda^\prime$ is our primary process lattice of discourse. Moreover, if $\alpha\leadsto\beta$, then $\beta$ is the \textbf{episodic entropy} of $\alpha$. The episodic entropy of process lattices can be discussed in the same way.
\end{definition}

We are interested in using the behaviour of episodic entropy in the analysis of variable reference lifetimes in process lattices. To do this, we show that if some reference $P_1$ references garbage data after its encompassing scope is terminated, then it must be the case that another reference $P_2$ references garbage data if $P_2$ shares a scope with $P_1$. In terms of the process lattice calculus, $P_1\leadsto\gamma$ should imply $P_2\leadsto\gamma$, and the general case should hold.

\begin{definition}
A \textbf{state} is a process lattice $\eta=\begin{pmatrix}x_1&\cdots&x_n\end{pmatrix}$, where $x_i$ takes either the value $(\omega^*\;\Lambda_i)$ or $\omega^*$. $\eta$ can also be given by the map $\eta:\mathbb L^n\rightarrow\mathbb L$ such that $\eta(\Lambda_1,\cdots,\Lambda_n)=\begin{pmatrix}x_1&\cdots&x_n\end{pmatrix}$.
\end{definition}

\begin{remark}
In the above definition we do not use single-index notation, but rather we enumerate the process lattices.
\end{remark}

\begin{definition}
An element $\eta_i$ of a state $\eta$ is \textbf{freed through} $\Lambda$ when $\eta^\prime=\begin{pmatrix}
\eta^\prime_1&\cdots&\eta^\prime_{i-1}&\omega^*&\cdots&\eta^\prime_n
\end{pmatrix}=\widehat{\left<\eta\setminus\Lambda\right>}$.
\end{definition}

\begin{definition}
A state $\eta$ is \textbf{freed through} $\Lambda$ when each element $\eta_i$ is freed through $\Lambda$.
\end{definition}

\begin{definition}
A \textbf{reference} is a process lattice $P_\eta$ given by the map $P:\mathbb L\times\mathbb N\rightarrow\mathbb L$, where $P_\eta[i]=\eta_i$. For any reference $P$, there must necessarily be some $\left(\:\omega^*\;\,\alpha\:\right)\sim P$.
\end{definition}

We have discussed a particular composition of process lattices but thus far have not discussed their meanings in an intuitive sense. Consider the description of a process lattice $\Lambda^\prime=\overline{\left<\Lambda\setminus\Lambda^{\prime\prime}\right>}$. Trivially, $\Lambda^\prime$ is a lifted form of the familiar ``result" type $\{\omega^*\}\times\ob{\Phi}$. Considering that $\Lambda\setminus\Lambda^{\prime\prime}$ is a process lattice composed by mapping some lifted result tuple to another process lattice, $\Lambda^{\prime\prime}$ can be thought of as a program in the sense that $\Lambda$ is some input parameter. So $\Lambda^\prime$ is a sort of result considering the inspection of $\Lambda\setminus\Lambda^{\prime\prime}$ to be an execution of the program $\Lambda^{\prime\prime}$ with parameter $\Lambda$.
Similarly, $\widehat{\left<\eta\setminus\Lambda\right>}$ is a state resulting from executing a program parameterised on a state. If $\eta\leadsto\eta$ or $\eta\leadsto\dot\gamma$, it can be said that the state is modified, as its type remains the same (it is still a state). Ergo, a process lattice of the form $\widehat{\left<\eta\setminus\Lambda\right>}\setminus\Lambda^\prime$ should be a program that modifies a state.

By describing a program in this way, we allow for a holistic description of reference lifetimes: a reference $P_\eta$ dies when $P_\eta\leadsto\gamma$, which is the case when $P_\eta$ takes the form $\omega^*$.

\begin{proposition}
Given $\eta^{\prime\prime}=\widehat{\left<\left(\widehat{\left<\eta\setminus\Lambda\right>}\setminus\begin{pmatrix}P_1&\cdots&P_n\end{pmatrix}\right)\right>}$, such that there does not exist an element $\eta_i=\omega^*$ but $\eta$ is freed through $\Lambda$, each reference $P_k\leadsto\gamma$.
\end{proposition}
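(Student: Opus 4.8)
The plan is to reduce the nested expression for $\eta''$ to an explicit all-$\omega^*$ row and then verify the three strict inequalities that Definition \ref{def:DegreesOfLoss} demands for three degrees of entropic loss. First I would unpack the hypothesis: by the definition of a state being freed through $\Lambda$, every element $\eta_i$ is freed through $\Lambda$, which is exactly the statement that $\widehat{\left<\eta\setminus\Lambda\right>}=\begin{pmatrix}\omega^*&\cdots&\omega^*\end{pmatrix}$, an $n$-entry row in which each slot has collapsed to $\omega^*$. Write $\eta^*$ for this freed state. Because the hypothesis also stipulates that no $\eta_i$ equals $\omega^*$ at the outset, each original slot must have carried the live form $(\,\omega^*\;\Lambda_i\,)$; this is the observation that later forces a genuine rather than vacuous loss, and it is what distinguishes this setting from the trivial $\alpha\leadsto\alpha'$ case.

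Next I would analyse the outer composition $\eta^*\setminus\begin{pmatrix}P_1&\cdots&P_n\end{pmatrix}$. Reading composition as the application of the program $\begin{pmatrix}P_1&\cdots&P_n\end{pmatrix}$ to the input $\eta^*$, and using that a reference satisfies $P_\eta[j]=\eta_j$, each $P_k$ resolves against $\eta^*$ by reading some slot $\eta^*_j$. Since every slot of the fully freed state equals $\omega^*$, each reference returns $\omega^*$ regardless of the index it carries. Inspecting the resulting lattice and extracting its content with $\widehat{\left<\cdot\right>}$ then yields $\eta''=\begin{pmatrix}\omega^*&\cdots&\omega^*\end{pmatrix}$, so the value recorded for every $P_k$ in $\eta''$ is $\omega^*$, which is precisely the form in which a reference is said to die.

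Finally I would certify that the transition each reference undergoes is three degrees of entropic loss and not a lesser degree. Taking $\Lambda\cong\alpha$ to be the content $P_k$ originally referenced and $\alpha'=\omega^*$ the value it assumes in $\eta''$, I would check the three conditions of Definition \ref{def:DegreesOfLoss}: $\alpha\neq\alpha'$ holds because $\alpha$ had the live form $(\,\omega^*\;\Lambda_k\,)\neq\omega^*$ (here the ``no $\eta_i=\omega^*$'' hypothesis is essential); $\alpha\nsim\alpha'$ holds because a populated referent and the bare episodic $\omega^*$ have incomparable morphism structures $\mu$; and $\chi_\alpha\neq\chi_{\alpha'}$ holds because the structural invariant of a populated datum differs from that of $\omega^*$. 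With all three strict relations in force the selected case is $\alpha\leadsto\gamma$, that is $P_k\leadsto\gamma$, and since $k$ was arbitrary every reference dies, establishing the general case that specialises to $P_1\leadsto\gamma\Rightarrow P_2\leadsto\gamma$.

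The main obstacle I anticipate is the middle step: pinning down, from the index-sweeping algorithm of Definition \ref{def:Inspection}, that dereferencing each $P_k$ against the fully freed $\eta^*$ really returns $\omega^*$ rather than some residual tuple, and that the row structure of $\eta''$ survives the outer $\widehat{\left<\cdot\right>}$ intact. This is where the informal ``program with parameter'' reading of composition must be reconciled with the concrete inspection computation, and where one must rule out any slot being accidentally rehabilitated by the inspection sweep before the content extraction is applied.
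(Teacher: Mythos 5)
Your proposal is correct and follows essentially the same route as the paper's own proof: use the freed-through hypothesis to collapse $\widehat{\left<\eta\setminus\Lambda\right>}$ to the all-$\omega^*$ row, observe that composing with $\begin{pmatrix}P_1&\cdots&P_n\end{pmatrix}$ again yields an all-$\omega^*$ row, and conclude that every $P_k$ takes the form $\omega^*$ in $\eta^{\prime\prime}$, hence $P_k\leadsto\gamma$. If anything, you are more careful than the paper at the final step: the paper simply invokes its earlier remark that a reference dies exactly when it takes the form $\omega^*$, whereas you verify the three conditions of Definition \ref{def:DegreesOfLoss} explicitly, and yours is the only version of the argument that actually uses the hypothesis that no $\eta_i=\omega^*$.
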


\begin{proof}
Allow $\eta^\prime=\widehat{\left<\eta\setminus\Lambda\right>}$. Because $\eta$ is freed through $\Lambda$, $\eta^\prime=\begin{pmatrix}
\omega^*&\cdots&\omega^*
\end{pmatrix}$,
$\eta^\prime\setminus\begin{pmatrix}P_1&\cdots&P_n\end{pmatrix}$ is a process lattice of references to the free $\eta^\prime$, such that $\eta^\prime\setminus\begin{pmatrix}P_1&\cdots&P_n\end{pmatrix}=\begin{pmatrix}\omega^*&\cdots&\omega^*\end{pmatrix}$. We see now $\eta^{\prime\prime}=\widehat{\left<\left(\begin{pmatrix}\omega^*&\cdots&\omega^*\end{pmatrix}\right)\right>}=\begin{pmatrix}\omega^*&\cdots&\omega^*\end{pmatrix}$, and so $P_k\leadsto\gamma$ for all $P_k$.
\end{proof}

\section{Conclusion}

The majority of the calculus is described in terms of a specific episodic flow category $\Phi$. This promotes the theory under conventional notions. The theory can, however, be discussed with generality. The utility of basic abstract mathematics would be crucial in conversation of the essential categorical premises, and it is unlikely that novel results would be found--this exploration is left to the reader. However, by specialisation of the framework derived from the initial axioms, we see a simple yet powerful case where flows are treated as simple mathematical objects. 

Certain questions remain; the episodic entropies might form an algebraic structure related to an order $\mathbf 0\leq\dot\gamma\leq\ddot\gamma\leq\dddot\gamma\leq\gamma$, where $\mathbf 0$ denotes no entropic loss, or a less conventional pair of orders $(\mathbf 0\rightleftarrows\gamma,\dot\gamma\leq\ddot\gamma\leq\dddot\gamma)$ to describe the polarity of minimal and maximal loss. Moreover, forming discrete and continuous functions $\bar f:\{\mathbf 0,\dot\gamma,\ddot\gamma,\dddot\gamma,\gamma\}\rightarrow [0,1]$ and $f:\Gamma\rightarrow [0,1]$, with some dense set of analogous loss objects $\Gamma$, which describe proportions of information might allow solving for the Shannon entropy of flows not necessarily in $\Phi$, whether by taking $f$ to be an interpolation of $\bar f$ or otherwise finding an appropriate probability distribution.

The notion of episodic entropy from freed states, in a program, naturally prepares a metric for determining cases in which references to data in memory share a lifetime. This idea further extrapolates to the discussion of information entropy in both classical information science and quantum information science. For example, encoding the trajectory of, say, a photon into a Schwarzschild black hole as a composition of process lattices could allow for analysis of the loss of that information, as might encoding a qubit's measurement.

To describe the actual deletion of information from a universe might require encoding such information as a state which suffers 2 or more degrees of information loss. This method of understanding information entropy is in certain cases more robust than the classic notion of Shannon entropy in information theory, in that the \textit{meaning} of data can be lost just the same as (but independent of) the data itself, as in the case of $\leadsto\ddot\gamma$.

\pagebreak

\bibliographystyle{plain}
\end{document}